\documentclass[twocolumn, amsmath,amssymb,aps, physrev]{revtex4-2}

\usepackage{graphicx}
\usepackage{dcolumn}
\usepackage{bm}

\usepackage{MnSymbol} 
\usepackage{color,xcolor}
\definecolor{color1}{HTML}{4e9773}
\definecolor{color1L}{HTML}{eef6f2}
\definecolor{color2}{HTML}{fab73d}
\definecolor{color2L}{HTML}{fffbf4}
\definecolor{color3}{HTML}{d2561a}
\definecolor{color3L}{HTML}{faede6}
\definecolor{color0}{HTML}{949494}
\definecolor{color0M}{HTML}{DCDCDC}
\definecolor{color0L}{HTML}{F8F8F8}

\usepackage{pgfplots}
\usepackage{graphicx}
\usepackage{tikz}
\usepackage{amsmath}
\usepackage{graphicx}
\usepgfplotslibrary{external} 
\usepackage{subcaption}
\usepackage{adjustbox} 		
\usepackage[section]{placeins} 
\usetikzlibrary{matrix,fit,calc,shapes.geometric,backgrounds,positioning,decorations.markings,decorations.pathreplacing,arrows,knots,hobby,angles,quotes,shapes,snakes,automata,petri}
\usepackage{wrapfig} 
\usepackage{hhline,colortbl} 
\usepackage[color=color2]{todonotes}

\usepackage{amsthm} 
\usepackage{amsmath,amssymb}
\usepackage{braket}
\usepackage{bbm}
\usepackage{nameref,cleveref}
\newtheorem{defi}{Definition}[section]

\newtheorem{prop}[defi]{Proposition}
\newtheorem{cor}[defi]{Corollary}
\crefname{chapter}{Chapter}{Chapter}
\Crefname{chapter}{Chapter}{Chapter}
\crefname{section}{Section}{Sections}
\Crefname{section}{Section}{Sections}
\crefname{algorithm}{Algorithm}{Algorithm}
\Crefname{algorithm}{Algorithm}{Algorithm}
\crefname{equation}{Equation}{Equation}
\Crefname{equation}{Equation}{Equation}
\crefname{figure}{Figure}{Figure}
\Crefname{figure}{Figure}{Figure}
\crefname{appendix}{Appendix}{Appendices}
\Crefname{appendix}{Appendix}{Appendices}
\crefname{prop}{Proposition}{Propositions}
\Crefname{prop}{Proposition}{Propositions}
\crefname{defi}{Definition}{Definitions}
\Crefname{defi}{Definition}{Definitions}
\newcommand{\tr}{\operatorname{tr}}

\begin{document}

\preprint{APS/123-QED}

\title{\textbf{Global zero-excitation state preparation through subsystem cooling} 
}%

\author{Kerstin Beer}
\email{Contact author: kerstin.beer@posteo.de}
\affiliation{
School of Mathematical and Physical Sciences, Macquarie University, Sydney, Australia
}%

\author{Daniel Burgarth}
\affiliation{%
Physics Department, Friedrich-Alexander Universit{\"a}t of Erlangen-Nuremberg, Staudtstr. 7, 91058 Erlangen, Germany
}%

\date{\today}

\begin{abstract}
Preparing interacting quantum systems in low-energy or ground states is a fundamental task in quantum simulation and quantum information processing. In realistic settings, dissipation and active cooling can typically be engineered only on a limited subset of the system. We study the dissipative dynamics of excitation-number conserving quantum systems governed by a GKLS master equation with local jump operators acting on a subset of qubits. We establish sufficient conditions under which such localized dissipation drives the full system to a unique globally attractive zero-excitation state. In particular, we prove that if the Hamiltonian generates excitation transfer described by a graph for which the dissipative subsystem forms a zero forcing set, then the zero-excitation state is the unique globally attractive stationary state. When this state coincides with a ground state of the Hamiltonian, the same mechanism realizes ground-state cooling. Our results provide a graph-theoretic criterion for global state preparation from localized dissipation, which we illustrate using a nearest-neighbor Heisenberg spin chain. For this model, a reduction to the single-excitation sector further yields a scaling estimate with the length of the chain which indicates efficient cooling.
\end{abstract}

\maketitle

\section{Introduction}

Quantum computing \cite{Nielsen2000} harnesses principles of quantum mechanics, such as superposition and entanglement, to process information in ways that differ fundamentally from classical computing. These properties enable quantum algorithms with advantages over their classical counterparts for specific tasks, including factorization \cite{Shor1994} and unstructured search \cite{Grover1996}. Every quantum algorithm requires a well-defined initial state from which the computation begins. Preparing well-controlled quantum states is therefore an essential ingredient in quantum computation, simulation, and other quantum technologies.

Many quantum information platforms rely on cooling in order to prepare low-energy or ground states. Even though quantum devices are often operated at extremely low temperatures, this does not necessarily prepare the system in the desired ground state, and residual excitations can remain. A variety of techniques have been developed for this purpose, ranging from laser cooling \cite{chan_laser_2011,stellmer_laser_2013,schreck_laser_2021} and evaporative cooling \cite{chaudhuri_evaporative_2007,li_tuning_2021} to adiabatic \cite{ho_intrinsic_2007,schachenmayer_adiabatic_2015}, measurement-based \cite{buffoni_quantum_2019,cotler_quantum_2019,langbehn_dilute_2024}, and dissipative approaches \cite{verstraete_quantum_2009,seki_dissipative_2026}. In large interacting systems, however, direct access to every constituent may be experimentally difficult or resource intensive.

\begin{figure*}
\centering
\begin{tikzpicture}[
    bn/.style={
        circle,
        fill=color3L,
        draw=black,
        font=\sffamily,
        minimum size=1mm
    },
    every node/.append style={bn},
    scale=0.7
]

\path node (1) {1}
    -- ++ (50:2.5) node (2) {2}
    -- ++(-95:1.75) node (3) {3}
    -- ++(-85:1.75) node (4) {4}
    -- ++(20:4.25) node (5) {5}
    -- ++(1.5,1.5) node (6) {6};

\begin{scope}[on background layer]
    \node[
        fill=color1L,
        draw=color1L,
        minimum size=4cm,
        text=black
    ] (c) at ([shift={(0.15,-1.5)}]6) {cooling};
\end{scope}

\draw (1)--(2)--(6)--(5)--(4)--(1)--(3)--(5)--(2)--(3)--(4);

\end{tikzpicture}
\caption{Schematic illustration of localized cooling applied to only part of a coupled qubit system. The Hamiltonian interactions can mediate the influence of the dissipative subsystem to qubits that are not cooled directly.}
\label{fig:cooling}
\end{figure*}

In this work, we investigate systems of coupled qubits in which dissipation acts only on a subset of the system, as illustrated schematically in Figure~\ref{fig:cooling}. This raises the central question of our work: under which conditions can localized dissipation drive an entire coupled quantum system to a unique zero-excitation state? When this zero-excitation state is also a ground state of the Hamiltonian, the same mechanism realizes ground-state cooling.

The possibility of controlling a large quantum system through access to only a small subsystem has previously been investigated in the context of quantum control. Burgarth and Giovannetti \cite{burgarth_full_2007} demonstrated a scheme in which local control of a subsystem is mediated to a larger system through a fixed coupling Hamiltonian. Related work developed an explicit protocol for cooling and controlling composite systems through local interactions \cite{burgarth_protocol_2008}. These protocols are formulated in terms of repeated discrete control steps, in which Hamiltonian evolution is interspersed with local operations such as swaps with auxiliary memory systems, whereas the present work considers continuous-time GKLS dynamics with Hamiltonian evolution and localized dissipation acting simultaneously. More recently, Langbehn \textit{et al.} \cite{langbehn_dilute_2024} studied measurement-induced cooling in the dilute limit and showed, for particular one-dimensional spin chains, that measurements applied to a single link can be sufficient for ground-state preparation. Related quasiparticle-cooling protocols have also used dissipative auxiliary degrees of freedom to remove excitations from interacting many-body systems and prepare low-energy states \cite{lloyd_quasiparticle_2025}. In contrast, rather than constructing a measurement or auxiliary-based cooling protocol for particular many-body models, we consider continuous-time GKLS dynamics with fixed localized jump operators and derive structural conditions under which such local dissipation guarantees convergence of the full system to a unique zero-excitation state. For the excitation-conserving qubit systems considered here, these conditions can be verified through the zero-forcing property of the excitation-transfer graph. In this sense, our criterion is not tied to one particular spin-chain model or cooling protocol: Hamiltonians with different coupling strengths and geometries can be treated within the same framework whenever they satisfy the required excitation-conservation and graph-connectivity conditions. This broader structural applicability comes at the cost of restricting the present analysis to excitation-conserving qubit systems and lowering-type dissipation. These results illustrate that access to only a small part of an interacting system can, under suitable conditions, influence the state of the full system.

Localized dissipation has also been studied directly in many-body open systems. For example, Kepesidis and Hartmann \cite{kepesidis_bose-hubbard_2012} considered a Bose--Hubbard model with particle loss acting on a single lattice site and showed that destructive interference can prevent part of the system from reaching the dissipative site. More generally, boundary-driven open quantum systems provide a broad setting in which dissipation acts only on restricted parts of an interacting system \cite{landi_nonequilibrium_2022,carlen_stationary_2025}. These examples also illustrate an important difficulty for localized cooling: excitations may remain trapped in portions of the system that are not sufficiently coupled to the dissipative degrees of freedom.

Complementary to these approaches, engineered dissipation has become an established tool for quantum-state preparation. Dissipative quantum-state engineering can be used to construct dynamics for which a desired target state is a stationary state or the unique stationary state \cite{kraus_preparation_2008,verstraete_quantum_2009,ding_single_2024,zhan_rapid_2026,ding_simple_2026}. Whereas many such approaches engineer or construct the dissipative dynamics specifically for the desired target state, our question is instead when a fixed localized lowering process acting only on a subsystem is already sufficient to determine the asymptotic state of the entire system. Structural properties of stationary states and their attractivity under GKLS dynamics have likewise been studied extensively. Baumgartner and Narnhofer \cite{baumgartner_analysis_2008} analyzed the structure of stationary states of finite-dimensional quantum dynamical semigroups with GKLS generators. As a special case, their results imply that for finite-dimensional GKLS dynamics, uniqueness of the steady state implies convergence to it. In the physics community, this was independently derived by Schirmer and Wang \cite{schirmer_stabilizing_2010}. Further sufficient conditions for uniqueness have been investigated in more recent work \cite{yoshida_uniqueness_2024}. These works address stationary-state structure and attractivity for broader classes of GKLS generators. Our setting is more specialized, but the additional excitation-conservation and locality structure allows us to reduce the question of uniqueness to a physically transparent condition on excitation transport and, for the qubit systems considered here, to the zero-forcing property of an interaction graph. These results provide the mathematical background for understanding when dissipative dynamics converges to a unique target state.

The speed of this convergence is another important aspect of dissipative state preparation. Spectral properties of quantum Markov generators can be used to characterize relaxation rates, and recent work has investigated spectral-gap bounds and mixing-time behavior in dissipative state preparation \cite{lucia_spectral_2025,zhan_rapid_2026}. While our primary question is the structural one of when dissipation acting only on part of a system is sufficient to enforce a unique global stationary state, for the Heisenberg chain considered below we additionally analyze the relaxation rate through a reduction to the single-excitation sector and derive its asymptotic system-size scaling.

Our approach separates this problem into two ingredients. First, we derive sufficient conditions for finite-dimensional GKLS dynamics under which a locally dissipative subsystem drives the full system to a unique globally attractive zero-excitation state. The conditions express conservation of total excitation number by the Hamiltonian, lowering of the excitation number by the dissipative jump operators, uniqueness of the locally annihilated state, and uniqueness of a Hamiltonian eigenstate compatible with the dissipative subsystem being in that state.

Second, for multipartite qubit systems with excitation-transfer interactions, we connect the last of these conditions to the structure of the interaction graph. Zero forcing is a graph-theoretic process in which an initially selected set of vertices progressively determines the rest of a graph through vertices with a unique unforced neighbor. It has previously been connected to controllability of linear and quantum systems on networks \cite{burgarth_zero_2013}. Here, we show that if the dissipative subsystem forms a zero forcing set of the excitation-transfer graph, then the global zero-excitation state is the only Hamiltonian eigenstate compatible with the dissipative subsystem being locally cooled. Zero forcing therefore provides a sufficient graph-theoretic condition for global attractivity under localized dissipation.

We illustrate these results using a nearest-neighbor Heisenberg spin chain, whose excitation-transfer graph is a path graph. Cooling one endpoint provides a zero forcing set, and the theoretical conditions therefore predict convergence to the global zero-excitation state. For small system sizes, we compare the full Liouvillian spectrum with a reduced single-excitation effective Hamiltonian and find agreement of the corresponding gap estimates to numerical precision. The reduced description further allows us to derive analytically an asymptotic $n^{-3}$ scaling of the reduced gap estimate, which we verify numerically for systems of up to $n=1000$ qubits.

The remainder of this paper is organized as follows. We first formulate the sufficient conditions for global convergence under subsystem-restricted GKLS dynamics and prove the corresponding proposition. We then introduce the excitation-transfer graph and zero forcing and derive a graph-theoretic corollary for multipartite qubit systems. Finally, we illustrate the theoretical results using the Heisenberg spin model, analyze the relaxation rate through the reduced single-excitation dynamics and its asymptotic scaling, and conclude with a discussion of the implications and possible directions for future work.

\section{Results}
To explore whether cooling a subset of qubits in a coupled quantum system can lead to the effective cooling of the entire system, we consider the dynamics generated by the Hamiltonian and describe the source of the cooling using GKLS dynamics. We divide the system into an indirectly cooled subsystem $A$ and an actively cooled subsystem $B$. We characterize cooling in terms of an excitation-counting operator $M$, such that the zero-excitation states of the two subsystems are denoted by $\ket{0_A}$ and $\ket{0_B}$. The following proposition gives conditions under which cooling subsystem $B$ drives the entire system to the zero-excitation state $\ket{0_A0_B}$.

\begin{prop}
\label{prop}
Let $\mathcal H = \mathcal H_A \otimes \mathcal H_B$ be a finite-dimensional Hilbert space.
Let
\begin{equation*}
M = M_A \otimes I_B + I_A \otimes M_B
\end{equation*}
be a Hermitian excitation-counting operator, where $\ket{0_A}$ and $\ket{0_B}$ are the unique eigenvectors of $M_A$ and $M_B$, respectively, associated with their smallest eigenvalues.

Suppose that the dynamics of a density operator $\rho$ is governed by the GKLS equation
\begin{equation}
\frac{d}{dt}\rho = \mathcal L(\rho)
= -i[H,\rho] + \sum_j \gamma_j \Big( B_j \rho B_j^\dagger - \tfrac{1}{2}\{B_j^\dagger B_j,\rho\} \Big),
\label{eqn:lindblad}
\end{equation}
where $\gamma_j > 0$, under the following conditions:
\begin{enumerate}

\item \textbf{Conservation of excitation number:} $M$ satisfies
\begin{equation*}
[H,M]=0.
\end{equation*}

\item \textbf{Lowering relation:} Each jump operator lowers the $M$-eigenvalue by $s_j > 0$:
\begin{equation*}
M B_j = B_j (M-s_j).
\end{equation*}

\item \textbf{Unique annihilated state:} The operators $B_j$ act only on subsystem $B$ and annihilate $\ket{0_B}$, i.e.
\begin{equation*}
B_j = I_A \otimes b_j,
\qquad b_j \ket{0_B}=0,
\end{equation*}
with
\begin{equation*}
\bigcap_j \ker(b_j)
=
\operatorname{span}\{\ket{0_B}\}.
\end{equation*}

\item \textbf{Factorized extremal eigenstate:}
There exists, up to a global phase, exactly one state
$\ket{\psi}$ of the form
\begin{equation*}
\ket{\psi}=\ket{\chi}\otimes\ket{0_B},
\qquad \ket{\chi}\in\mathcal H_A,
\end{equation*}
that is an eigenstate of $H$, i.e.
\begin{equation*}
H\ket{\psi}=h\ket{\psi}
\end{equation*}
for some $h\in\mathbb{R}$.

\end{enumerate}

Then the dynamics has a unique globally attractive stationary state:
\begin{equation*}
\lim_{t\to\infty} e^{t\mathcal L}(\rho(0))
=
\ket{0_A0_B}\bra{0_A0_B},
\qquad \forall \rho(0).
\end{equation*}
\end{prop}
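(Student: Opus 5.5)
Our plan has three steps. First we show that $\ket{0_A0_B}\bra{0_A0_B}$ is stationary. Then we show it is the unique stationary state. Finally we invoke the result of Baumgartner and Narnhofer \cite{baumgartner_analysis_2008} (equivalently Schirmer and Wang \cite{schirmer_stabilizing_2010}): for finite-dimensional GKLS dynamics, a unique stationary state is globally attractive.

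\emph{Stationarity.} Since $M_A\otimes I_B$ and $I_A\otimes M_B$ commute, the smallest eigenvalue of $M$ is $m_A+m_B$. Uniqueness of $\ket{0_A}$ and $\ket{0_B}$ shows that $\ket{0_A0_B}$ spans the bottom eigenspace of $M$.
\begin{itemize}
\item Because $[H,M]=0$, $H$ preserves this one-dimensional eigenspace, so $\ket{0_A0_B}$ is an eigenvector of $H$.
\item By condition 3, $B_j\ket{0_A0_B}=\ket{0_A}\otimes b_j\ket{0_B}=0$ for all $j$.
\end{itemize}
A pure state that is an $H$-eigenvector and lies in $\bigcap_j\ker B_j$ is annihilated by $\mathcal L$, since every term of the generator vanishes. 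In particular $\ket{0_A0_B}=\ket{0_A}\otimes\ket{0_B}$ is of the form in condition 4, so it is the state $\ket{\psi}$ singled out there.

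\emph{Uniqueness.} Let $\rho$ be any stationary state. We first show that the dissipator forces $\rho$ into $\mathcal H_A\otimes\ket{0_B}$.
\begin{itemize}
\item By the lowering relation, $B_j$ maps the $M$-eigenspace with eigenvalue $\mu$ into the one with eigenvalue $\mu-s_j$. Together with $[H,M]=0$, this gives $\frac{d}{dt}\tr(M\rho)=-\sum_j\gamma_j s_j\tr(B_j^\dagger B_j\rho)$.
\item At stationarity this derivative is zero, and each term is nonnegative with $\gamma_j s_j>0$. Hence $\tr(B_j^\dagger B_j\rho)=0$, i.e.\ $B_j\rho=0$ for all $j$.
\item Therefore the support of $\rho$ lies in $\bigcap_j\ker(I_A\otimes b_j)=\mathcal H_A\otimes\bigcap_j\ker b_j=\mathcal H_A\otimes\ket{0_B}$.
\end{itemize}
With the dissipator vanishing on $\rho$, stationarity reduces to $[H,\rho]=0$. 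Diagonalize $\rho=\sum_k p_k\ket{\phi_k}\bra{\phi_k}$, with all $\ket{\phi_k}$ in $\mathcal H_A\otimes\ket{0_B}$. Each eigenspace of $\rho$ for a nonzero eigenvalue is $H$-invariant and contained in $\mathcal H_A\otimes\ket{0_B}$. Diagonalizing the Hermitian $H$ inside such a subspace produces $H$-eigenvectors of the factorized form $\ket{\chi}\otimes\ket{0_B}$. By condition 4 there is exactly one such vector up to phase, namely $\ket{0_A0_B}$. Hence every nonzero eigenspace of $\rho$ is one-dimensional and spanned by $\ket{0_A0_B}$, so $\rho=\ket{0_A0_B}\bra{0_A0_B}$.

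\emph{Attractivity and main obstacle.} Uniqueness of the stationary state gives convergence from every initial $\rho(0)$ by the cited theorem. The delicate step is the passage from the invariant subspace to eigenvectors: we must ensure that an $H$-invariant subspace inside $\mathcal H_A\otimes\ket{0_B}$ contains an eigenvector of the factorized form. This holds because $H$ restricted to the subspace is Hermitian and every vector in $\mathcal H_A\otimes\ket{0_B}$ is a product with $\ket{0_B}$. The other point needing care is the trace computation $\frac{d}{dt}\tr(M\rho)$. There one should check that $B_j^\dagger M B_j-\tfrac12\{B_j^\dagger B_j,M\}=-s_jB_j^\dagger B_j$, which follows from $MB_j=B_j(M-s_j)$ and its adjoint.
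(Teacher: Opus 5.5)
Your proposal is correct and follows the paper's proof essentially step by step. Both arguments show that $\ket{0_A0_B}$ is stationary, use $\frac{d}{dt}\langle M\rangle=-\sum_j\gamma_j s_j\tr(B_j^\dagger B_j\rho)$ to force every stationary state's support into $\mathcal H_A\otimes\ket{0_B}$, combine $[H,\rho]=0$ with Condition 4 to get uniqueness, and cite Schirmer--Wang (equivalently Baumgartner--Narnhofer) for global attractivity; your step of restricting $H$ to each eigenspace of $\rho$ is just a slightly more explicit version of the paper's simultaneous-diagonalization step.
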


\begin{proof}\hspace{1cm}

\textbf{Step 1 - Zero state is a stationary state:} 
As a first step, we check that the zero state
$\rho_0 := \ket{\psi_0}\bra{\psi_0}
=\ket{0_A0_B}\bra{0_A0_B}$
is a stationary state. Note that $\ket{0_A0_B}$ is a non-degenerate eigenvector of $M$ associated with its smallest eigenvalue. With Condition 1, $[H,M]=0$, we know that it is also an eigenvector of $H$. 
Furthermore, by Condition 3, $B_j\ket{0_A0_B}=0$ for all $j$. Therefore, both the Hamiltonian and dissipative parts of the GKLS equation vanish on $\rho_0$, and thus
\begin{equation*}
\mathcal L(\rho_0)=0.
\end{equation*}

\textbf{Step 2 - $\dfrac{d}{dt}\langle M\rangle$ under the lowering relation:}

In this step, we study the time derivative of the expectation value
$\dfrac{d}{dt}\langle M\rangle=\tr(M \frac{d \rho}{dt})$.
First, we consider an arbitrary state $\rho(t)$.

Using Equation~\ref{eqn:lindblad}, it follows
\begin{align*}
\frac{d}{dt}\langle M\rangle
=& \tr\!\left(M\mathcal L(\rho)\right)\\
=& -i\tr\!\left(M[H,\rho]\right)
+ \sum_j \gamma_j
\tr\!\big(
B_j^\dagger M B_j \rho\\
&-\frac{1}{2}M B_j^\dagger B_j\rho
-\frac{1}{2}B_j^\dagger B_jM\rho
\big).
\end{align*}

With Condition 1 and the cyclicity of the trace, it follows
\begin{equation*}
\tr(M[H,\rho])
=
\tr([M,H]\rho)
=
0.
\end{equation*}
Hence, with the lowering relation of Condition 2,
\begin{align*}
\frac{d}{dt}\langle M\rangle
=& \sum_j \gamma_j
\tr\!\left(
\left(
B_j^\dagger B_j(M-s_j)
-\frac{1}{2}M B_j^\dagger B_j
-\frac{1}{2}B_j^\dagger B_jM
\right)\rho
\right).
\end{align*}

Note that we use $[B_j^\dagger B_j,M]=0$: given
$M B_j=B_j(M-s_j)$, we obtain
$[M,B_j]=-s_jB_j$ and
$[M,B_j^\dagger]=s_jB_j^\dagger$.
Using $[M,AB]=[M,A]B+A[M,B]$, we get
\begin{align*}
[M,B_j^\dagger B_j]
&= [M,B_j^\dagger]B_j + B_j^\dagger[M,B_j]\\
&= s_jB_j^\dagger B_j - s_jB_j^\dagger B_j\\
&=0.
\end{align*}

Therefore,
\begin{align*}
\frac{d}{dt}\langle M\rangle
=& \sum_j \gamma_j
\tr\!\left(
\left(
B_j^\dagger B_jM
-s_jB_j^\dagger B_j
-\frac{1}{2}B_j^\dagger B_jM
-\frac{1}{2}B_j^\dagger B_jM
\right)\rho
\right)\\
=& -\sum_j s_j\gamma_j\,
\tr(B_j^\dagger B_j\rho)
\leq 0.
\end{align*}

Thus, the expectation value of $M$ cannot increase under the dynamics: the Hamiltonian evolution leaves it unchanged, while the dissipative terms can only decrease it.

Now, let us choose an arbitrary stationary state
$\mathcal L(\rho_\filledstar)=0$.
For a stationary state,
$\dfrac{d}{dt}\langle M\rangle=0$.
Since $s_j>0$, $\gamma_j>0$, and each summand
$\tr(B_j^\dagger B_j\rho_\filledstar)\geq0$
because $\rho_\filledstar\geq0$ and $B_j^\dagger B_j\geq0$,
it follows that
\begin{equation*}
\tr(B_j^\dagger B_j\rho_\filledstar)
=
0
\quad\text{for every }j.
\end{equation*}

Consider the spectral decomposition
$\rho_\filledstar=\sum_k p_k|\psi_k\rangle\langle\psi_k|$,
where $p_k>0$ and
$|\psi_k\rangle\in\mathrm{supp}(\rho_\filledstar)$.
We get the sum of non-negative terms
\begin{align*}
\tr(B_j^\dagger B_j\rho_\filledstar)
&=
\sum_k p_k
\langle\psi_k|B_j^\dagger B_j|\psi_k\rangle\\
&=
\sum_k p_k\,\|B_j|\psi_k\rangle\|^2.
\end{align*}
It follows that
$\|B_j|\psi_k\rangle\|^2=0$
for all $j$ and all
$|\psi_k\rangle\in\mathrm{supp}(\rho_\filledstar)$.
That means
$B_j|\psi_k\rangle=0$,
which we can phrase as
\begin{equation*}
\mathrm{supp}(\rho_\filledstar)
\subseteq
\ker(B_j).
\end{equation*}
Since we get this expression for all $j$, it follows overall that
\begin{equation*}
\mathrm{supp}(\rho_\filledstar)
\subseteq
\bigcap_j\ker(B_j).
\end{equation*}

Since $B_j=I_A\otimes b_j$, we have

\begin{equation*}
\bigcap_j\ker(B_j)
=
\mathcal H_A\otimes\bigcap_j\ker(b_j).
\end{equation*}

With Condition 3,
\begin{equation*}
\bigcap_j\ker(b_j)
=
\operatorname{span}\{\ket{0_B}\},
\end{equation*}
and therefore
\begin{equation*}
\mathrm{supp}(\rho_\filledstar)
\subseteq
\mathcal H_A\otimes
\operatorname{span}\{\ket{0_B}\}.
\end{equation*}

Hence, any stationary state must have the form
\begin{equation*}
\rho_\filledstar
=
\sigma\otimes\ket{0_B}\langle0_B|,
\end{equation*}
for some positive operator $\sigma$ on $\mathcal H_A$ with
$\tr(\sigma)=1$.

\textbf{Step 3 - Uniqueness from factorized extremal state:}

From Step 2, we know that any stationary state has the form
\begin{equation*}
\rho_\filledstar
=
\sigma\otimes\ket{0_B}\langle0_B|.
\end{equation*}
From Equation~\ref{eqn:lindblad}, we have
\begin{equation*}
-i[H,\rho_\filledstar]
=
\sum_j \gamma_j
\Big(
B_j\rho_\filledstar B_j^\dagger
-\tfrac12\{B_j^\dagger B_j,\rho_\filledstar\}
\Big).
\end{equation*}

Since $\rho_\filledstar=\sigma\otimes\ket{0_B}\langle0_B|$ and
$B_j=I_A\otimes b_j$ with $b_j\ket{0_B}=0$, the right-hand side vanishes, and thus
\begin{equation*}
[H,\rho_\filledstar]=0.
\end{equation*}

Since $H$ and $\rho_\filledstar$ are Hermitian and commute, they can be simultaneously diagonalized. Therefore, we can choose a common eigenbasis $\{\ket{\nu_i}\}$ such that
\begin{equation*}
\rho_\filledstar
=
\sum_i p_i \ket{\nu_i}\bra{\nu_i},
\end{equation*}
where every $\ket{\nu_i}$ with $p_i>0$ is also an eigenvector of $H$.

From Step 2, we already know that
\begin{equation*}
\mathrm{supp}(\rho_\filledstar)
\subseteq
\mathcal H_A\otimes\operatorname{span}\{\ket{0_B}\}.
\end{equation*}
Hence, every eigenvector $\ket{\nu_i}$ in the support of $\rho_\filledstar$ must have the form
\begin{equation*}
\ket{\nu_i}
=
\ket{\mu_i}\otimes\ket{0_B}.
\end{equation*}

With Condition 4, there exists exactly one eigenstate of $H$ of this form. Therefore, the support of $\rho_\filledstar$ is one-dimensional and
\begin{equation*}
\rho_\filledstar
=
\ket{\chi}\bra{\chi}
\otimes
\ket{0_B}\langle0_B|.
\end{equation*}

From Step 1, we already know that $\ket{0_A0_B}$ is an eigenstate of $H$. Since it is also a factorized state with subsystem $B$ in $\ket{0_B}$, Condition 4 implies that it is the unique eigenstate of this form, up to a global phase. Therefore,
\begin{equation*}
\rho_\filledstar
=
\ket{0_A0_B}\bra{0_A0_B}.
\end{equation*}

Thus, the stationary state is unique.

\textbf{Step 4 - Global attractivity:}

Since the GKLS dynamics acts on a finite-dimensional Hilbert space and $\rho_0$ is the unique stationary state, the result of Schirmer and Wang \cite{schirmer_stabilizing_2010} implies that $\rho_0$ is globally asymptotically stable. Therefore,
\begin{equation*}
\lim_{t\to\infty} e^{t\mathcal L}(\rho(0))
=
\rho_0
=
\ket{0_A0_B}\bra{0_A0_B}
\qquad
\forall \rho(0).
\end{equation*}

\end{proof}
In the following, we provide physical intuition for Proposition~\ref{prop}.
Condition 1 states that the Hamiltonian cannot create or destroy total excitations; it can only redistribute them between subsystems $A$ and $B$. Condition 2 guarantees that every dissipative event moves the system down the excitation ladder defined by $M$. Combined with Condition 1, this creates a directed flow in state space: the coherent dynamics redistributes excitations, while the dissipation removes them.

Condition 3 demands that $\ket{0_B}$ is the unique state of subsystem $B$ that is annihilated by all jump operators. Hence, once subsystem $B$ reaches $\ket{0_B}$, the dissipation acts trivially on it, and there is no other state of $B$ in which the dissipative dynamics can become trapped.

Condition 4 rules out dark states in which subsystem $B$ is already in its zero-excitation state while excitations remain trapped elsewhere in the system. Such a state would no longer be affected by the local dissipation and could therefore prevent convergence to the global zero-excitation state. Condition 4 excludes this possibility by requiring $\ket{0_A0_B}$ to be the only Hamiltonian eigenstate whose $B$-part is $\ket{0_B}$. We illustrate this mechanism more concretely below using simple excitation-transfer graphs in the discussion of zero forcing.

Proposition~\ref{prop} therefore shows that Conditions 1--4 are sufficient for convergence to the zero-excitation state. In a multipartite system, Condition 4 is the condition that is less direct to verify from the microscopic couplings. The remaining question is therefore whether the interaction structure of the Hamiltonian guarantees that an eigenstate with the dissipative subsystem in $\ket{0_B}$ must in fact have all qubits in their zero-excitation state. To formalize this idea, we introduce a graph-theoretic description of excitation transfer and relate it to the notion of zero forcing \cite{aim_zero_2008}.

\begin{defi}[zero forcing set]
Let $G=(V,E)$ be a graph. Given a $V_0 \subset V$ and starting from the initial coloring in which the vertices in $V_0$ are colored blue and all vertices in $V\setminus V_0$ are colored white, execute the following iterative step: if a blue vertex has exactly one white neighbor, then that neighbor is recolored blue.

The procedure is repeated until no further recoloring is possible. If this process results in all vertices being blue, then $V_0$ is called a \emph{zero forcing set} of $G$.
\end{defi}

Two simple examples illustrate the zero forcing rule.

\emph{Example 1: Chain.}
Consider the path graph
\begin{center}
\begin{tikzpicture}[every node/.style={circle,draw,minimum size=6mm,font=\small}]
\node[fill=blue!15] (1) at (0,0) {1};
\node (2) at (1.5,0) {2};
\node (3) at (3,0) {3};
\node (4) at (4.5,0) {4};
\draw (1)--(2)--(3)--(4);
\end{tikzpicture}
\end{center}
with $V_0=\{1\}$. Vertex $1$ forces $2$, then $2$ forces $3$, and finally $3$ forces $4$. Hence, $V_0=\{1\}$ is a zero forcing set.

\emph{Example 2: Y-shaped graph.}
Now consider the graph
\begin{center}
\begin{tikzpicture}[every node/.style={circle,draw,minimum size=6mm,font=\small}]
\node[fill=blue!15] (1) at (0,0) {1};
\node (2) at (1.5,0) {2};
\node (3) at (3,0.8) {3};
\node (4) at (3,-0.8) {4};
\draw (1)--(2);
\draw (2)--(3);
\draw (2)--(4);
\end{tikzpicture}
\end{center}

with $V_0=\{1\}$. Vertex $1$ first forces $2$. However, vertex $2$ then has two white neighbors, $3$ and $4$, so no further forcing step is possible. Hence, $V_0=\{1\}$ is not a zero forcing set.

These examples illustrate the role of branching: in the chain, there is a unique direction in which the forcing process can propagate, whereas in the Y-shaped graph the central vertex cannot determine which of several unforced neighbors should be forced next. In the quantum setting below, this distinction will correspond to whether localized dissipation can rule out excitations remaining elsewhere in the system.

To connect this graph-theoretic notion with the dynamics of our quantum system, we consider a system of $N$ qubits with vertex set $V=\{1,\ldots,N\}$ and Hilbert space
\begin{equation*}
\mathcal H = (\mathbb C^2)^{\otimes N}.
\end{equation*}
We denote by $\sigma_v^x, \sigma_v^y, \sigma_v^z$ the Pauli operators acting on qubit $v$, explicitly understood as acting trivially on all other qubits, i.e. with the identity acting on every qubit except $v$. We define the excitation-lowering and excitation-raising operators as
\begin{align*}
\sigma_v^- &= \frac{1}{2}\left(\sigma_v^x+i\sigma_v^y\right)
= \ket{0_v}\bra{1_v},\\
\sigma_v^+ &= \frac{1}{2}\left(\sigma_v^x-i\sigma_v^y\right)
= \ket{1_v}\bra{0_v}.
\end{align*}

The operators
$\sigma_u^+\sigma_v^-+\sigma_u^-\sigma_v^+$
exchange an excitation between qubits $u$ and $v$, whereas the terms
$\sigma_u^z\sigma_v^z$ and $\sigma_v^z$ do not change the excitation configuration.
Consequently, the connectivity through which excitations can propagate is determined by the nonzero transfer couplings $J_{uv}$.

\begin{defi}[Excitation-transfer graph for Hamiltonian $H$]
Let
\begin{equation*}
\mathcal H = (\mathbb C^2)^{\otimes N}
\end{equation*}
be the Hilbert space of $N$ qubits with vertex set $V=\{1,\ldots,N\}$.
Assume the Hamiltonian $H$ conserves excitation number, i.e. $[H,M]=0$, and can be written in the form
\begin{align*}
H =& \sum_{\{u,v\}\subset V}
J_{uv}\left(\sigma_u^+ \sigma_v^- + \sigma_u^- \sigma_v^+\right) \\
&+ \sum_{\{u,v\}\subset V}
\Delta_{uv}\, \sigma_u^z \sigma_v^z
+ \sum_{v\in V} h_v \sigma_v^z ,
\end{align*}
with real coefficients $J_{uv}, \Delta_{uv}, h_v$.

The \emph{excitation-transfer graph} associated with $H$ is the undirected graph
$G=(V,E)$ with vertex set $V$ and edge set
\begin{equation*}
E = \{\{u,v\}\subset V \mid J_{uv} \neq 0\}.
\end{equation*}
\end{defi}

Using this graph-theoretic perspective, we can heuristically interpret how the influence of the dissipative subsystem propagates through the excitation-transfer graph induced by the Hamiltonian. Suppose a subset of qubits forms the dissipative subsystem and is therefore driven towards its zero-excitation state. If a qubit in this set has exactly one neighboring qubit outside the set, then an excitation on that neighboring qubit would couple, through the Hamiltonian, to the already cooled qubit. For an eigenstate whose cooled qubits remain in their zero-excitation state, this is only possible if the neighboring qubit is also in its zero-excitation state. Consequently, this qubit can be added to the effectively cooled set. Repeating this argument iteratively extends the set of qubits constrained to the zero-excitation state. If the initially cooled subsystem forms a zero forcing set of the excitation-transfer graph, the argument eventually reaches the entire system. Hence, the only Hamiltonian eigenstate compatible with the dissipative subsystem being in its zero-excitation state is the global zero-excitation state. By Proposition~\ref{prop}, this implies convergence to that state. The above intuition is formalized in the following corollary.

\begin{cor}
\label{cor}

Let
\begin{equation*}
\mathcal H = (\mathbb C^2)^{\otimes N}
\end{equation*}
be a system of $N$ qubits, and assume that Conditions 1, 2 and 3 of Proposition~\ref{prop} are satisfied.
Let $G=(V,E)$ denote the excitation-transfer graph associated with $H$, and let $V_0 \subset V$ be the set of vertices corresponding to the dissipative subsystem $B$.

If $V_0$ is a zero forcing set of $G$, then Condition 4 of Proposition~\ref{prop} is satisfied.
Hence, the GKLS dynamics has a unique globally attractive stationary state given by the zero-excitation state
\begin{equation*}
\rho_0 = |0_A0_B\rangle\langle 0_A0_B|.
\end{equation*}
In particular,
\begin{equation*}
\lim_{t\to\infty} e^{t\mathcal L}(\rho) = \rho_0
\qquad \text{for all initial states } \rho .
\end{equation*}

\end{cor}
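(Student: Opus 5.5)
The plan is to verify Condition~4 of Proposition~\ref{prop} directly, by turning the zero forcing process into an induction on the set of qubits that are known to be in $\ket{0}$ in every component of a candidate eigenstate. Once Condition~4 holds, the remaining claims of the corollary follow immediately from Proposition~\ref{prop}, because Conditions~1--3 are assumed.

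\textbf{Existence.} By Step~1 of the proof of Proposition~\ref{prop}, $\ket{0_A0_B}=\ket{0\cdots0}$ is an eigenvector of $H$, and it has the required product form $\ket{\chi}\otimes\ket{0_B}$.

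\textbf{Uniqueness: setup.} Let $\ket{\psi}=\ket{\chi}\otimes\ket{0_B}$ satisfy $H\ket{\psi}=h\ket{\psi}$. Expand $\ket{\psi}$ in the computational basis $\ket{x}$, $x\in\{0,1\}^N$. Say that $\ket{\psi}$ \emph{vanishes on} $S\subseteq V$ if every basis vector with nonzero amplitude satisfies $x_v=0$ for all $v\in S$. Equivalently, $\sigma^-_v\ket{\psi}=0$ for all $v\in S$. Initially, $\ket{\psi}$ vanishes on $V_0$ by assumption.

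\textbf{Uniqueness: inductive step.} Suppose $\ket{\psi}$ vanishes on the current blue set $S\supseteq V_0$, and that $b\in S$ has exactly one neighbour $w\notin S$ in $G$. Let $P$ be the projector onto basis states with $x_b=1$ and $x_v=0$ for all $v\in S\setminus\{b\}$. Then $P\ket{\psi}=0$, so applying $P$ to the eigenvalue equation gives $PH\ket{\psi}=hP\ket{\psi}=0$.

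Now go through the terms of $H$ acting on $\ket{\psi}$.
\begin{itemize}
\item The diagonal terms $\Delta_{uv}\sigma^z_u\sigma^z_v$ and $h_v\sigma^z_v$ preserve basis configurations. They therefore keep $x_b=0$ and are annihilated by $P$.
\item A hopping term not touching $b$ leaves $x_b=0$, so it is also killed by $P$.
\item A hopping term $J_{bu}(\sigma_b^+\sigma_u^-+\sigma_b^-\sigma_u^+)$ with $u\in S$ gives zero on $\ket{\psi}$, since both $b$ and $u$ are unoccupied in every component.
\end{itemize}
The only surviving contribution is therefore
\begin{equation*}
PH\ket{\psi}=J_{bw}\,P\,\sigma_b^+\sigma_w^-\ket{\psi}=J_{bw}\,\sigma_b^+\sigma_w^-\ket{\psi}.
\end{equation*}
The projector $P$ can be dropped here because $\sigma_w^-$ does not create excitations on $S$. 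Since $J_{bw}\neq0$, and $\sigma_b^+$ is injective on the subspace with $x_b=0$, we conclude $\sigma_w^-\ket{\psi}=0$. Hence $\ket{\psi}$ also vanishes on $S\cup\{w\}$.

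\textbf{Conclusion.} Each forcing step of the zero forcing process on $G$ thus enlarges the set on which $\ket{\psi}$ vanishes. Because $V_0$ is a zero forcing set, we eventually reach $S=V$. Then $\ket{\psi}$ has support only on $\ket{0\cdots0}$, so it equals $\ket{0_A0_B}$ up to a phase. This establishes Condition~4, and Proposition~\ref{prop} then yields the unique globally attractive stationary state $\rho_0$.

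The step needing the most care is the computation of $PH\ket{\psi}$. One must check that no other term of $H$ can produce a configuration with $x_b=1$ and $S\setminus\{b\}$ empty, and in particular that $\sigma^z$-type terms cannot mimic an excitation on $b$. The choice of $P$ is designed exactly to isolate the single term $J_{bw}\sigma_b^+\sigma_w^-$, which the zero forcing rule guarantees is unique.
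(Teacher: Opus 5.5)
Your proposal is correct and follows essentially the same route as the paper: you project the eigenvalue equation onto configurations with the forcing vertex excited and the rest of the blue set empty, isolate the single term $J_{bw}\sigma_b^+\sigma_w^-$, deduce $\sigma_w^-\ket{\psi}=0$, and iterate along the zero forcing sequence before invoking Proposition~\ref{prop}. Your explicit existence check via Step~1 and your use of the injectivity of $\sigma_b^+$ on the $x_b=0$ subspace spell out details the paper leaves implicit.
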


\begin{proof}

We show that the zero forcing condition implies Condition 4 of
Proposition~\ref{prop}. Let $\ket{\psi}$ be an eigenstate of $H$ of the form
\begin{equation*}
\ket{\psi}
=
\ket{\chi}\otimes\ket{0_B},
\qquad
H\ket{\psi}=h\ket{\psi}.
\end{equation*}
Since the vertices in $V_0$ correspond to subsystem $B$, all vertices in
$V_0$ are in their zero-excitation state. Hence, we can write
\begin{equation*}
\ket{\psi}
=
\ket{0_{V_0}}\otimes\ket{\phi}_{V\setminus V_0}.
\end{equation*}

Suppose that, at some stage of the zero forcing process, all vertices in a
set $S$ with
\begin{equation*}
V_0\subseteq S\subseteq V
\end{equation*}
are known to be in their zero-excitation state, such that
\begin{equation*}
\ket{\psi}=\ket{0_S}\otimes\ket{\phi}_{V\setminus S}.
\end{equation*}
Let $v\in S$ be a vertex with exactly one neighbor $w\notin S$, as required
by the zero forcing rule.
Consider the eigenvalue equation
\begin{equation*}
H\ket{\psi}=h\ket{\psi}
\end{equation*}
and project it onto the subspace in which $v$ is excited while all other
vertices in $S$ remain in their zero-excitation state. The projection of the
right-hand side vanishes, since $v$ is in the state $\ket{0_v}$ in
$\ket{\psi}$.
The diagonal terms
$\sigma_u^z\sigma_v^z$ and $\sigma_v^z$ in $H$ do not change the excitation
configuration and therefore do not contribute to this projected subspace.
Likewise, excitation-transfer terms not incident on $v$ cannot excite $v$,
and excitation-transfer terms between $v$ and vertices already in $S$
vanish because these vertices are in their zero-excitation state.\\
Since $w$ is the unique neighbor of $v$ outside $S$, the only remaining term
that can create an excitation on $v$ is
\begin{equation*}
J_{vw}\sigma_v^+\sigma_w^- .
\end{equation*}
Therefore, the projected eigenvalue equation gives
\begin{equation*}
J_{vw}\sigma_w^-\ket{\phi}_{V\setminus S}=0.
\end{equation*}
Since $\{v,w\}\in E$, we have $J_{vw}\neq0$, and hence
\begin{equation*}
\sigma_w^-\ket{\phi}_{V\setminus S}=0.
\end{equation*}
For a qubit,
\begin{equation*}
\ker(\sigma_w^-)
=
\operatorname{span}\{\ket{0_w}\},
\end{equation*}
and therefore $w$ must also be in its zero-excitation state. Thus,
\begin{equation*}
\ket{\psi}
=
\ket{0_{S\cup\{w\}}}
\otimes
\ket{\phi'}_{V\setminus(S\cup\{w\})}.
\end{equation*}
Repeating this argument along the zero forcing sequence, and using that
$V_0$ is a zero forcing set, eventually all vertices in $V$ are forced to
their zero-excitation state. Hence,
\begin{equation*}
\ket{\psi}
=
\ket{0_1\cdots0_N}
=
\ket{0_A0_B}.
\end{equation*}
Thus, up to a global phase, $\ket{0_A0_B}$ is the unique eigenstate of $H$
of the form $\ket{\chi}\otimes\ket{0_B}$. Therefore, Condition 4 of
Proposition~\ref{prop} is satisfied.

By Proposition~\ref{prop}, the GKLS dynamics therefore has the unique
globally attractive stationary state
\begin{equation*}
\rho_0
=
\ket{0_A0_B}\bra{0_A0_B},
\end{equation*}
and
\begin{equation*}
\lim_{t\to\infty}e^{t\mathcal L}(\rho)
=
\rho_0
\qquad
\text{for all initial states }\rho.
\end{equation*}

\end{proof}

The inductive argument above is closely related to the propagation mechanism used in the zero-forcing controllability framework of \cite{burgarth_zero_2013}. In both settings, the zero forcing process provides an ordering in which influence propagates from an initial set of vertices to the rest of the graph via vertices with a unique unforced neighbor. The difference is in the physical interpretation: in \cite{burgarth_zero_2013}, zero forcing is used to show the propagation of controllability through the network, whereas here it is used to show that no Hamiltonian eigenstate can retain excitations outside the dissipative subsystem while the latter is in its zero-excitation state. To see this explicitly, consider again the Y-shaped graph in Example 2 above. Denoting states with a single excitation at site $n$ by $|n\rangle$, it is easy to see that $|3\rangle -|4\rangle$ is an eigenstate of the Hamiltonian with eigenvalue $0$, a dark state, which is also invariant under the noise and therefore remains stuck.  In this way, zero forcing provides a sufficient condition for Condition 4 of Proposition~\ref{prop}, which then implies convergence to the global zero-excitation state under localized dissipation.

\section{Numerical simulations}

For numerical results supporting our findings above, we consider the Heisenberg spin model \cite{auerbach_interacting_1994}, which describes a system of spins interacting via nearest-neighbor exchange interactions. This model plays a central role in quantum many-body physics, capturing essential features of magnetism, entanglement, and quantum phase transitions.

In the following, we denote the spin-$1/2$ operators by
$S^\alpha=\sigma^\alpha/2$ with $\alpha\in\{x,y,z\}$, where $\sigma^\alpha$ are the Pauli matrices, and define
\[
S_i^\alpha=\mathbbm{1}^{\otimes i-1}\otimes S^\alpha\otimes\mathbbm{1}^{\otimes n-i}.
\]
The Hamiltonian used in the numerical simulations is
\begin{equation*}
H = \sum_{i=1}^{n-1} \sum_{\alpha=x,y,z} S_i^\alpha S_{i+1}^\alpha.
\end{equation*}

Hence, for every nearest-neighbor pair $\{i,i+1\}$, the Hamiltonian induces coherent excitation transfer between vertices $i$ and $i+1$.

The excitation-transfer graph associated with $H$ therefore coincides with the path graph $P_n$. Any endpoint of the path graph is a zero forcing set; in particular, choosing the last vertex $v_B=n$ gives the zero forcing sequence
\begin{equation*}
n\rightarrow n-1\rightarrow\cdots\rightarrow1.
\end{equation*}

We define the excitation-counting operator as
\begin{equation*}
M=\sum_{i=1}^{n}\left(\frac{\mathbbm{1}}{2}-S_i^z\right)=M_A+M_B,
\end{equation*}
where
\begin{equation*}
M_A=\sum_{i=1}^{n-1}\left(\frac{\mathbbm{1}}{2}-S_i^z\right),
\qquad
M_B=\frac{\mathbbm{1}}{2}-S_n^z.
\end{equation*}
With this convention, $\ket{0}^{\otimes n}$ is the unique zero-excitation state.

We choose
\begin{equation*}
B=\mathbbm{1}^{\otimes n-1}\otimes\ket{0}\bra{1}.
\end{equation*}
We denote the corresponding dissipation rate by $\gamma_1$ and set
$\gamma_1=1$ in the numerical simulations unless stated otherwise.

To see that $[H,M]=0$, note that
\begin{equation*}
S_i^xS_{i+1}^x+S_i^yS_{i+1}^y
=
\frac{1}{2}\left(
S_i^+S_{i+1}^-+S_i^-S_{i+1}^+
\right),
\end{equation*}
where $S_i^\pm=S_i^x\mp iS_i^y$.
These terms exchange an excitation between neighboring qubits without changing the total excitation number, while the terms $S_i^zS_{i+1}^z$ leave the excitation configuration unchanged. Hence,
\begin{equation*}
[H,M]=0.
\end{equation*}
Furthermore, the jump operator lowers the excitation number by one,
\begin{equation*}
MB=B(M-1),
\end{equation*}
and
\begin{equation*}
\ker(\ket{0}\bra{1})
=
\operatorname{span}\{\ket{0}\}.
\end{equation*}
Thus, Conditions 1--3 of Proposition~\ref{prop} are satisfied. Since the dissipative vertex $v_B=n$ is a zero forcing set of the excitation-transfer graph, Corollary~\ref{cor} applies. Therefore, the global zero-excitation state $\ket{0}^{\otimes n}$ is the unique globally attractive stationary state.

To simulate the time evolution of a density matrix
\begin{equation*}
\frac{d\rho}{dt} = \mathcal{L}(\rho),
\end{equation*}
we vectorize $\rho$ and denote the matrix representation of the Liouvillian
$\mathcal{L}$ by $\widetilde{\mathcal{L}}$. The vectorized density matrix is
then propagated between consecutive time steps according to
\begin{equation}
|\rho(t+\delta t))
=
e^{\widetilde{\mathcal{L}}\delta t}
|\rho(t)).
\end{equation}

We denote the eigenvalues of the Liouvillian $\mathcal{L}$ by
$\{\lambda_k\}\subset\mathbb{C}_{\le 0}$. For the dissipative dynamics considered here,
\begin{equation*}
\operatorname{Re}(\lambda_k)\leq0,
\end{equation*}
and the stationary state corresponds to the eigenvalue $\lambda_0=0$.

For a unique stationary state, we define the positive Liouvillian spectral gap as
\begin{equation*}
\Delta
=
-\max_{\lambda_k\neq0}\operatorname{Re}(\lambda_k).
\end{equation*}
Thus, $\Delta$ is determined by the nonzero Liouvillian eigenvalue whose real part lies closest to zero.

\begin{figure}[h]
\centering
\begin{subfigure}[b]{0.32\textwidth}
\centering
\includegraphics[width=\linewidth]{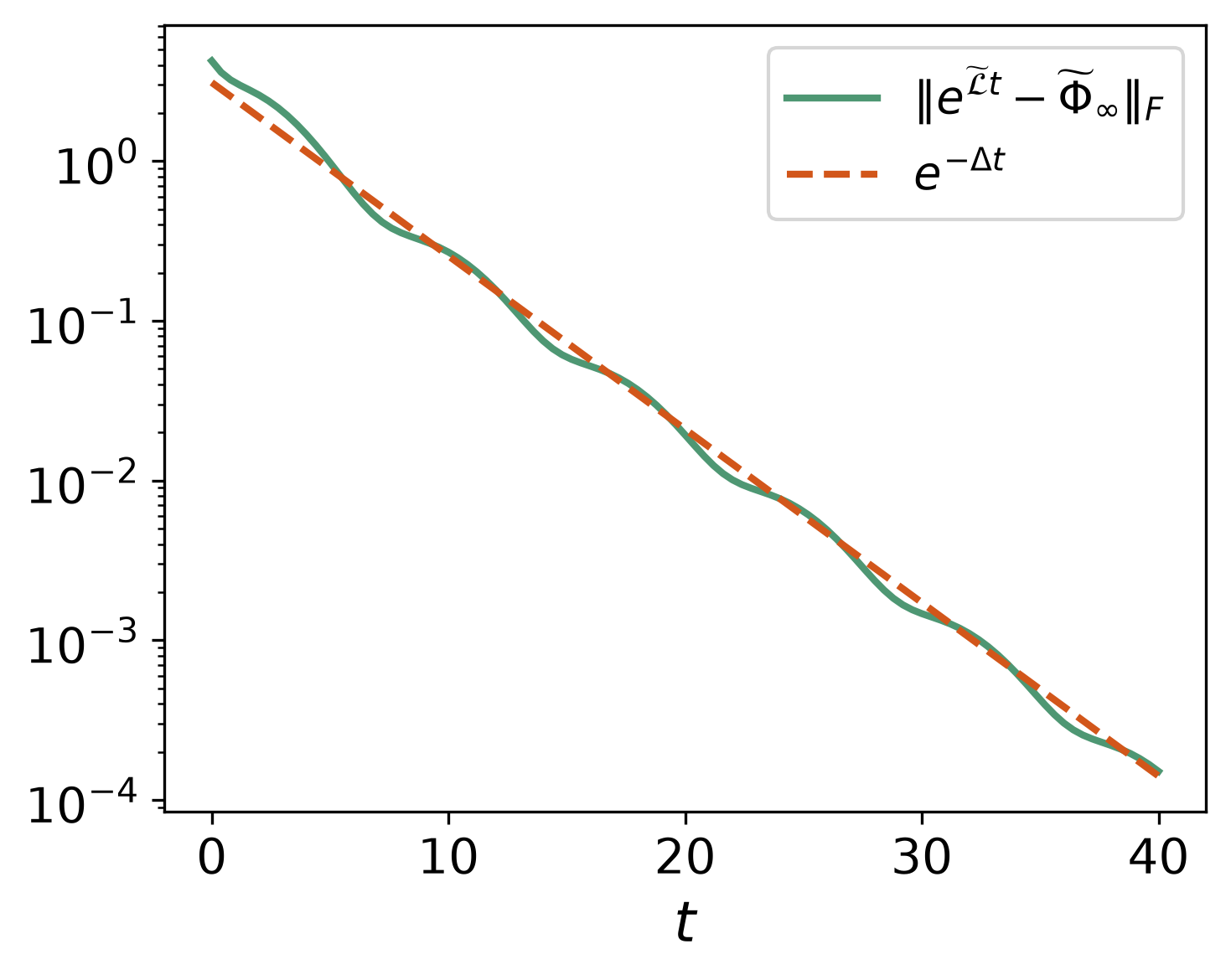}
\caption{$n=2$}
\end{subfigure}
\begin{subfigure}[b]{0.32\textwidth}
\centering
\includegraphics[width=\linewidth]{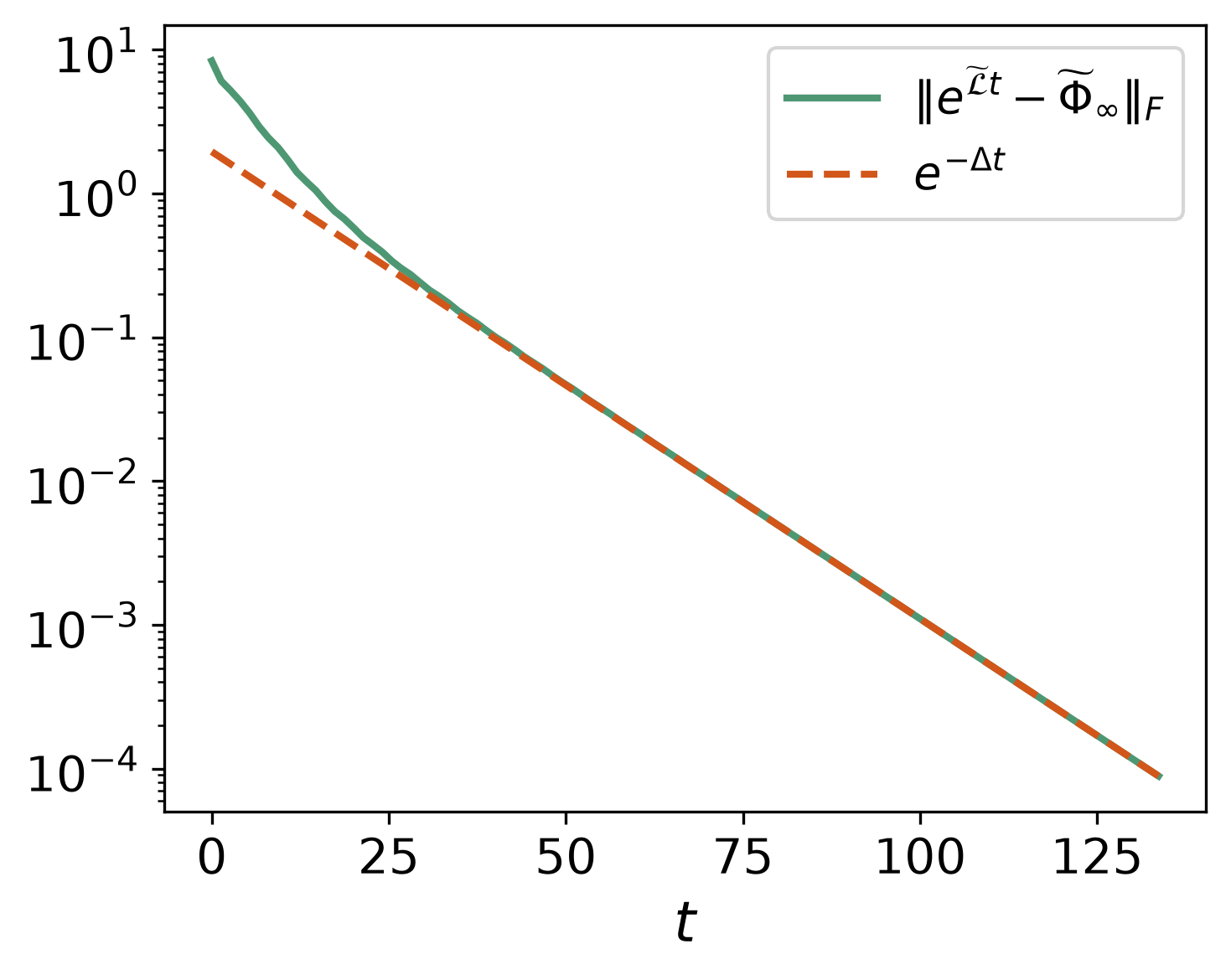}
\caption{$n=3$}
\end{subfigure}
\begin{subfigure}[b]{0.32\textwidth}
\centering
\includegraphics[width=\linewidth]{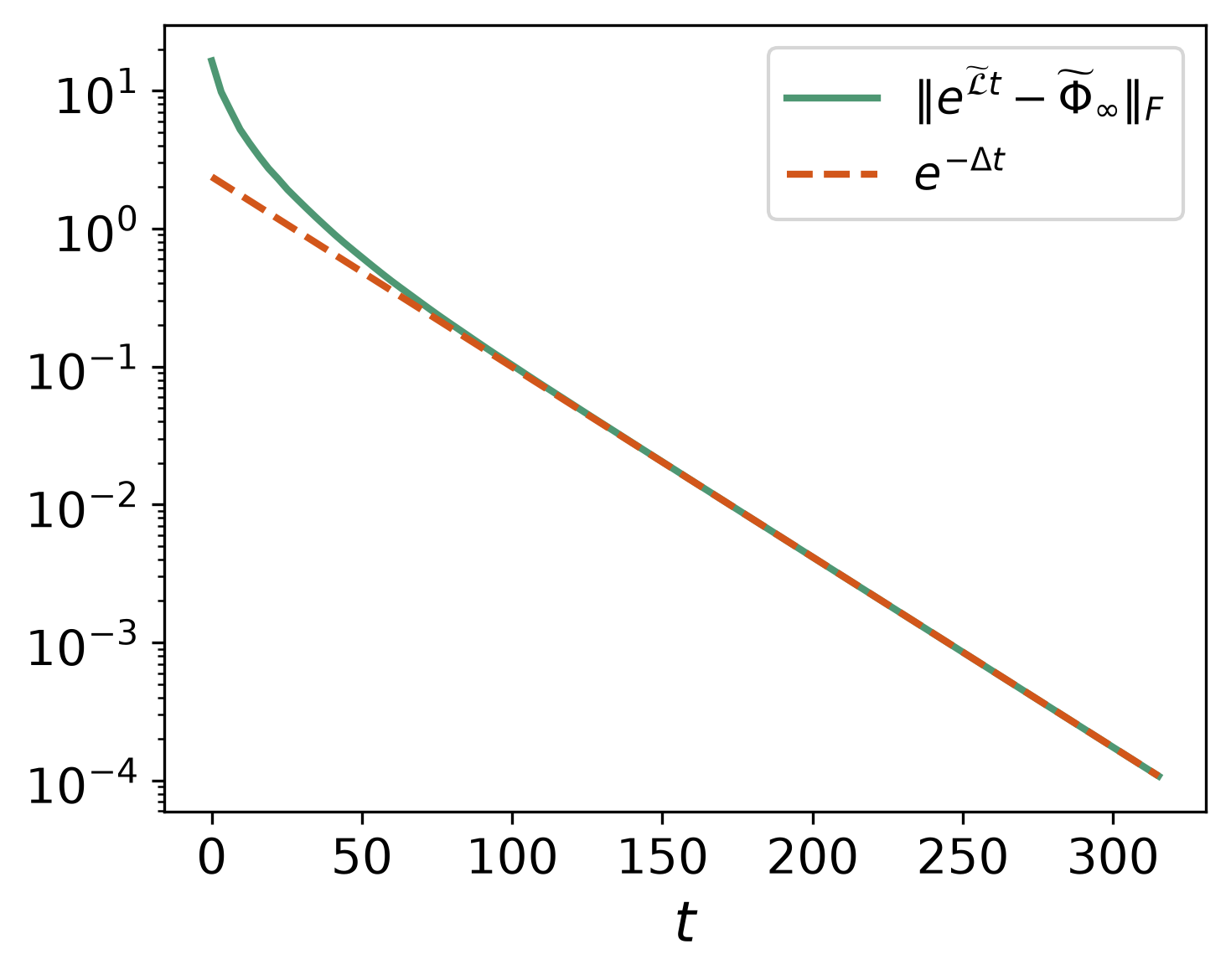}
\caption{$n=4$}
\end{subfigure}

\caption{\textbf{Convergence of the dynamical map to its asymptotic cooling map.}
The solid curves show the Frobenius norm
$\|e^{\widetilde{\mathcal{L}}t}-\widetilde{\Phi}_\infty\|_F$
for systems of (a) two, (b) three, and (c) four qubits. The vertical axis is logarithmic. The dashed curves decay as $e^{-\Delta t}$, with $\Delta$ obtained from the Liouvillian spectrum.}
\label{fig:norm}
\end{figure}

This quantity determines the asymptotic convergence timescale of the system
toward the stationary state \cite{spohn_algebraic_1977}, with the slowest
decaying contribution scaling as
\begin{equation*}
e^{-\Delta t}.
\end{equation*}

For the cooling dynamics considered here, the unique stationary state is
\begin{equation*}
\rho_{\mathrm{ss}}
=
\ket{0\ldots0}\bra{0\ldots0}.
\end{equation*}

We compare the full dynamical map with its asymptotic limit. Since every density matrix has unit trace and converges to $\rho_{\mathrm{ss}}$, the asymptotic map acting on an arbitrary operator $X$ is
\begin{equation*}
\Phi_\infty(X)
=
\rho_{\mathrm{ss}}\operatorname{Tr}(X).
\end{equation*}

The factor $\operatorname{Tr}(X)$ extends the map linearly from normalized density matrices to the full operator space.

Using the same vectorization convention as above, the matrix representation of this map is
\begin{equation*}
\widetilde{\Phi}_\infty
=
|\rho_{\mathrm{ss}}))((\mathbbm{1}|,
\end{equation*}
where
\begin{equation*}
((\mathbbm{1}|X))
=
\operatorname{Tr}(X).
\end{equation*}

We therefore evaluate
\begin{equation*}
\left\|
e^{\widetilde{\mathcal{L}}t}
-
\widetilde{\Phi}_\infty
\right\|_F,
\end{equation*}
where for simplicity we chose the Frobenius norm,
\begin{equation*}
\|A\|_F
=
\sqrt{\operatorname{Tr}\left(A^\dagger A\right)}.
\end{equation*}
This quantity vanishes when the dynamical map approaches its asymptotic cooling map.

Figure~\ref{fig:norm} shows this convergence for $n=2,\ldots,4$ on a logarithmic vertical axis. The dashed reference curves decay as $e^{-\Delta t}$, where $\Delta$ is obtained from the Liouvillian spectrum. Their vertical normalization is fixed using a late-time point of the corresponding numerical curve. The numerical curves approach the predicted asymptotic decay rate at late times. The corresponding simulations and plotting code are available in the accompanying repository~\cite{beer_cooling_code}.

\section{Single-excitation reduction and asymptotic scaling}
Let us now try to get an analytical handel on the scaling of the cooling efficiency for large dimensions. Intuitively, the dominant contribution (the hardest to catch state) should stem from the single excitation sector, where the probabiliy to catch an excitation on the cooling site would be lowest.
For $n=2,\ldots,8$, this intuition is confirmed numerically, as the reduced single-excitation gap
$\Delta_{\mathrm{red}}$ introduced below agrees with a direct calculation
of the full Liouvillian gap $\Delta$ to numerical precision. This
motivates us to use the single-excitation sector to analyze the relaxation
rate for larger systems and to investigates its asymptotic scaling.

Since the Hamiltonian conserves the total excitation number, the Hilbert space decomposes into fixed-excitation sectors,
\begin{equation*}
\mathcal{H}
=
\bigoplus_{m=0}^{n}\mathcal{H}_m,
\end{equation*}
where $\mathcal{H}_m$ denotes the subspace containing exactly $m$ excitations. The zero-excitation sector is
\begin{equation*}
\mathcal{H}_0
=
\operatorname{span}\{\ket{\mathrm{vac}}\},
\qquad
\ket{\mathrm{vac}}
=
\ket{0}^{\otimes n},
\end{equation*}
and the single-excitation sector is
\begin{equation*}
\mathcal{H}_1
=
\operatorname{span}\{\ket{j}\}_{j=1}^{n},
\end{equation*}
where $\ket{j}$ denotes the state with a single excitation localized at site $j$.

Restricting $H$ to $\mathcal{H}_1$ gives the $n\times n$ Hamiltonian
\begin{equation*}
H_1
=
\begin{pmatrix}
\frac{n-3}{4} & \frac{1}{2} & 0 & \cdots & 0 \\
\frac{1}{2} & \frac{n-5}{4} & \frac{1}{2} & \ddots & \vdots \\
0 & \frac{1}{2} & \ddots & \ddots & 0 \\
\vdots & \ddots & \ddots & \frac{n-5}{4} & \frac{1}{2} \\
0 & \cdots & 0 & \frac{1}{2} & \frac{n-3}{4}
\end{pmatrix}.
\end{equation*}
The off-diagonal entries $1/2$ arise from the
$S_i^xS_{i+1}^x+S_i^yS_{i+1}^y$ terms and describe hopping of the excitation between neighboring sites. The diagonal terms arise from the $S_i^zS_{i+1}^z$ interactions. A boundary excitation creates one anti-aligned nearest-neighbor bond, giving the diagonal value $(n-3)/4$, whereas a bulk excitation creates two, giving $(n-5)/4$.

Within the zero- and single-excitation sectors, the cooling jump acts as
\begin{equation*}
B
=
\ket{\mathrm{vac}}\bra{n},
\end{equation*}
and hence
\begin{equation*}
B^\dagger B
=
\ket{n}\bra{n}.
\end{equation*}
The corresponding non-Hermitian effective Hamiltonian in the single-excitation sector is therefore
\begin{equation*}
H_{\mathrm{eff}}
=
H_1
-
\frac{i\gamma_1}{2}\ket{n}\bra{n}.
\end{equation*}
The imaginary term represents loss of amplitude when the excitation occupies the dissipative boundary site.

We denote the eigenvalues of $H_{\mathrm{eff}}$ by $\{\mu_a\}$. Let
$\ket{\phi_a}$ be a corresponding right eigenvector. Since the vacuum is
unaffected by the jump operator and has energy
\begin{equation*}
E_{\mathrm{vac}}
=
\frac{n-1}{4},
\end{equation*}
the operator $\ket{\phi_a}\bra{\mathrm{vac}}$ is an eigenoperator of the
Liouvillian with eigenvalue
\begin{equation*}
\lambda_a^{\mathrm{red}}
=
-i\left(\mu_a-E_{\mathrm{vac}}\right).
\end{equation*}
Its decay rate is therefore
\begin{equation*}
-\operatorname{Re}\left(\lambda_a^{\mathrm{red}}\right)
=
-\operatorname{Im}(\mu_a).
\end{equation*}
We define the reduced single-excitation gap as
\begin{equation*}
\Delta_{\mathrm{red}}(n)
=
\min_a\left[-\operatorname{Im}(\mu_a)\right].
\end{equation*}

The reduction is also computationally advantageous. Whereas the full
Hilbert-space dimension grows as $2^n$ and the matrix representation of the
Liouvillian has dimension $4^n$, $H_{\mathrm{eff}}$ is only an
$n\times n$ tridiagonal matrix. We therefore use
$\Delta_{\mathrm{red}}(n)$ to study substantially larger systems, as well as providing analytical asymptotics.

The asymptotic system-size dependence of $\Delta_{\mathrm{red}}(n)$ can be
obtained analytically. To expose the quasi-uniform tridiagonal structure,
we define
\begin{equation*}
T_n
=
2H_{\mathrm{eff}}
-
\frac{n-5}{2}\mathbbm{1}
=
\begin{pmatrix}
1 & 1 & 0 & \cdots & 0 \\
1 & 0 & 1 & \ddots & \vdots \\
0 & 1 & \ddots & \ddots & 0 \\
\vdots & \ddots & \ddots & 0 & 1 \\
0 & \cdots & 0 & 1 & 1-i\gamma_1
\end{pmatrix}.
\end{equation*}
Following the treatment of quasi-uniform tridiagonal matrices in
Ref.~\cite{banchi_spectral_2013}, we parameterize the bulk eigenvalues as
\begin{equation*}
\theta=2\cos k.
\end{equation*}
For an eigenvector $v=(v_1,\ldots,v_n)^T$ of $T_n$ with eigenvalue
$\theta=2\cos k$, the bulk eigenvalue equation is
\begin{equation*}
v_{j-1}+v_{j+1}
=
2\cos(k)\,v_j,
\qquad
j=2,\ldots,n-1.
\end{equation*}
At the left boundary, the first row of $T_n$ gives
\begin{equation*}
v_1+v_2
=
2\cos(k)\,v_1.
\end{equation*}
Comparing this with the bulk recurrence extended formally to $j=1$
shows that the left boundary condition can be written as
\begin{equation*}
v_0=v_1.
\end{equation*}
A solution of the bulk recurrence satisfying this condition is therefore
\begin{equation*}
v_j
\propto
\cos\left[\left(j-\frac{1}{2}\right)k\right].
\end{equation*}

At the dissipative boundary, the last row of $T_n$ gives
\begin{equation*}
v_{n-1}
+
(1-i\gamma_1)v_n
=
2\cos(k)\,v_n.
\end{equation*}

Equivalently, extending the bulk recurrence to a fictitious site $n+1$
gives the right boundary condition
\begin{equation*}
v_{n+1}
=
(1-i\gamma_1)v_n.
\end{equation*}
Substituting the cosine solution yields the spectral condition
\begin{equation*}
\cos\left[\left(n+\frac{1}{2}\right)k\right]
=
(1-i\gamma_1)
\cos\left[\left(n-\frac{1}{2}\right)k\right].
\end{equation*}
This can equivalently be written as
\begin{equation*}
\tan(nk)
=
\frac{i\gamma_1}{2-i\gamma_1}
\cot\left(\frac{k}{2}\right).
\end{equation*}

\begin{figure}[h]
\centering
\includegraphics[width=0.85\linewidth]{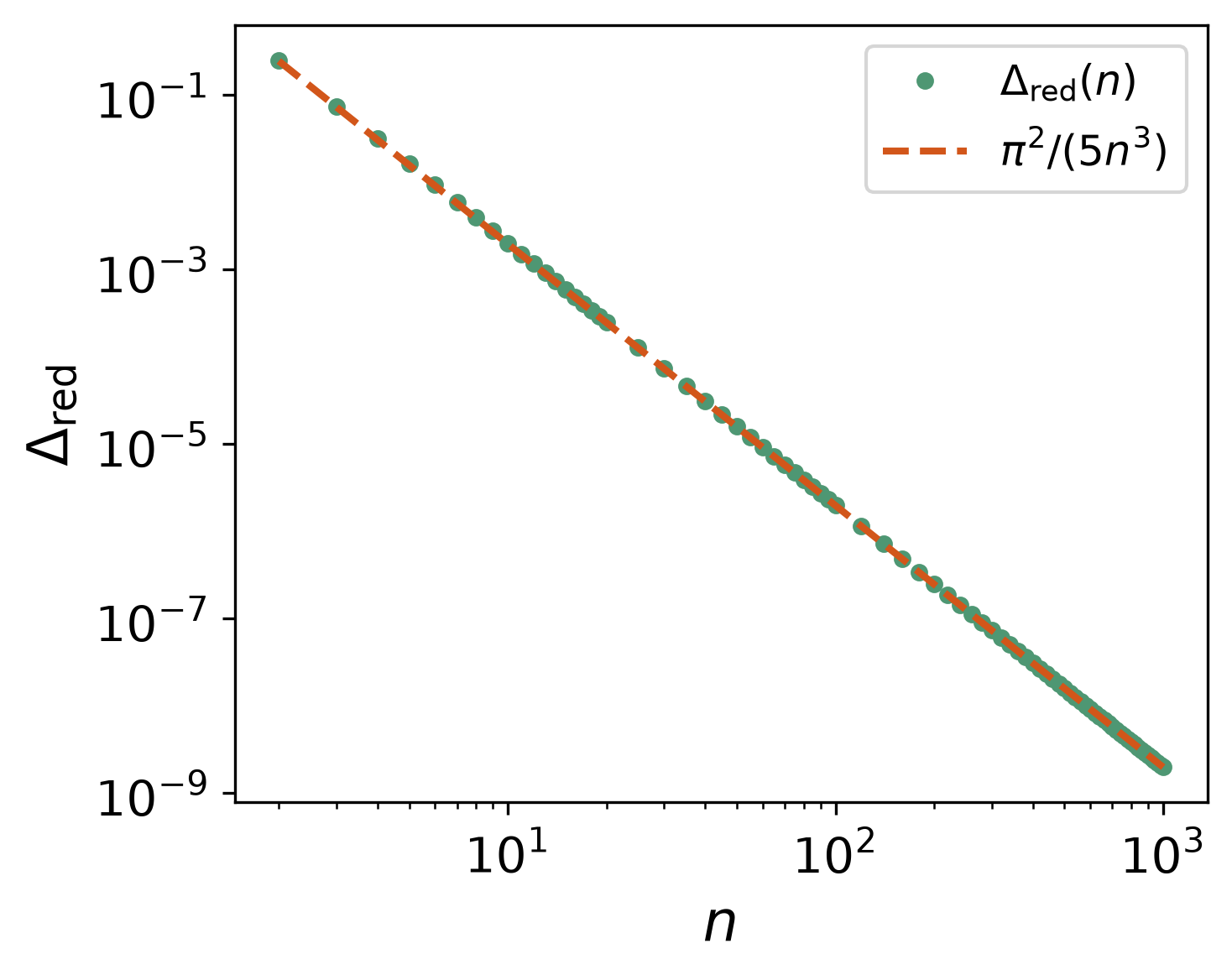}
\caption{\textbf{System-size scaling of the single-excitation gap estimate.}
The points show $\Delta_{\mathrm{red}}(n)$ obtained numerically from the
reduced single-excitation effective Hamiltonian for systems up to
$n=1000$ with $\gamma_1=1$. Both axes are logarithmic. The dashed line
shows the corresponding analytical asymptotic result
$\Delta_{\mathrm{red}}(n)=\pi^2/(5n^3)$.}
\label{fig:gap_scaling}
\end{figure}

For fixed $\gamma_1>0$, the least-damped modes approach the two band edges
as $n$ increases. Expanding the spectral condition around the lower band
edge, $k\rightarrow\pi$, gives
\begin{equation*}
-\operatorname{Im}\theta_{\mathrm{lower}}
=
\frac{2\gamma_1}{4+\gamma_1^2}
\frac{\pi^2}{n^3}
+
O(n^{-4}),
\end{equation*}
whereas expansion around the upper band edge, $k\rightarrow0$, gives
\begin{equation*}
-\operatorname{Im}\theta_{\mathrm{upper}}
=
\frac{1}{2\gamma_1}
\frac{\pi^2}{n^3}
+
O(n^{-4}).
\end{equation*}
The least-damped mode is determined by the smaller of these two decay
rates, so that
\begin{equation*}
-\operatorname{Im}\theta_{\mathrm{slow}}
=
\frac{\pi^2}{n^3}
\min\left\{
\frac{2\gamma_1}{4+\gamma_1^2},
\frac{1}{2\gamma_1}
\right\}
+
O(n^{-4}).
\end{equation*}

Since
\begin{equation*}
\theta
=
2\mu-\frac{n-5}{2},
\end{equation*}
the shift is purely real, and hence
\begin{equation*}
\Delta_{\mathrm{red}}(n)
=
-\operatorname{Im}\mu_{\mathrm{slow}}
=
-\frac{1}{2}\operatorname{Im}\theta_{\mathrm{slow}}.
\end{equation*}
Consequently,
\begin{equation}
\Delta_{\mathrm{red}}(n)
=
\frac{\pi^2}{n^3}
\min\left\{
\frac{\gamma_1}{4+\gamma_1^2},
\frac{1}{4\gamma_1}
\right\}
+
O(n^{-4}).
\label{eq:gap_asymptotic}
\end{equation}

For the value $\gamma_1=1$ used in our numerical simulations, the
lower-band contribution is the smaller one, and Eq.~\eqref{eq:gap_asymptotic}
reduces to
\begin{equation*}
\Delta_{\mathrm{red}}(n)
=
\frac{\pi^2}{5n^3}
+
O(n^{-4}).
\end{equation*}

Figure~\ref{fig:gap_scaling} compares the analytical asymptotic result for
$\gamma_1=1$ with the numerical eigenvalues of $H_{\mathrm{eff}}$ for
systems up to $n=1000$. The numerical results approach the predicted
$n^{-3}$ scaling and its analytical prefactor as the system size increases.

Thus, for any fixed $\gamma_1>0$, the reduced gap estimate closes
cubically with system size, while the prefactor depends on the dissipation
strength. The corresponding reduced relaxation timescale therefore grows
as $n^3$.

\section{Outlook and Conclusion}

In this study, we have explored whether cooling only a subsystem of a coupled quantum system can lead to effective cooling of the entire system. We derived sufficient conditions under which localized dissipation drives the system to a unique globally attractive zero-excitation state. For multipartite qubit systems with excitation-transfer interactions, we further showed that the zero-forcing property of the interaction graph provides a sufficient condition for this convergence. We illustrated these results numerically for the Heisenberg spin model.

For small system sizes, direct diagonalization of the full Liouvillian agrees to numerical precision with the decay rate obtained from the reduced single-excitation effective Hamiltonian. For $\gamma_1=1$, we analytically find that the corresponding reduced gap estimate scales asymptotically as
\begin{equation*}
\Delta_{\mathrm{red}}(n)
=
\frac{\pi^2}{5n^3}
+
O(n^{-4}),
\end{equation*}
and verify this behavior numerically for systems of up to $n=1000$ qubits. This implies a polynomial $n^3$ scaling of the corresponding reduced relaxation timescale.

Our findings provide a connection between localized dissipative control, excitation transport, and the graph structure of an interacting quantum system. They show that access to only a subset of a system can, under suitable conditions, be sufficient to prepare a global zero-excitation state. When this state coincides with a ground state of the Hamiltonian, the same mechanism provides a route to ground-state cooling. This may be relevant in settings where direct dissipative control of every component of a quantum system is experimentally difficult or resource intensive.

Looking ahead, an important open question is whether the vacuum--single-excitation coherence sector remains the slowest Liouvillian sector for arbitrary system size. Establishing this would clarify when the reduced single-excitation decay rate determines the full Liouvillian gap beyond the small systems tested directly here. It would also be interesting to determine how the asymptotic scaling and its prefactor change with interaction strengths and with the structure of the excitation-transfer graph. More broadly, identifying further classes of Hamiltonians and dissipative processes for which localized cooling leads to global state preparation remains an interesting direction. Additionally, exploring the interplay between cooling methods and error-correction protocols could help clarify the role of localized dissipation in practical quantum architectures.

In conclusion, our results establish sufficient conditions under which localized subsystem cooling leads to global preparation of a unique zero-excitation state, and identify zero forcing as a graph-theoretic criterion that guarantees this behavior for excitation-transfer networks. For the Heisenberg chain considered here, we provide numerical and analytical evidence for this cooling to be efficient.

\bibliography{cooling_bib}
\end{document}